\newtheorem{lemma}{\bf{Lemma}}
\newtheorem*{proof}{\it{Proof:}}
\IEEEoverridecommandlockouts\IEEEpubid{\makebox[\columnwidth]{ 979-8-3503-1090-0/23/\$31.00 ©2023 Crown \hfill} \hspace{\columnsep}\makebox[\columnwidth]{ }}
\begin{document}

\title{QoS Aware Transmit Beamforming for Secure Backscattering in Symbiotic Radio Systems}

\vspace{-2mm}
\author{\IEEEauthorblockN{Mingcheng Nie\IEEEauthorrefmark{1},
Deepak Mishra\IEEEauthorrefmark{1}, Azzam Al-nahari\IEEEauthorrefmark{2}, Jinhong Yuan\IEEEauthorrefmark{1}, and  Riku J$\ddot{\text{a}}$ntti\IEEEauthorrefmark{2}
}
\IEEEauthorblockA{\IEEEauthorrefmark{1}School of Electrical Engineering and Telecommunications, University of New South Wales, Sydney, NSW 2052, Australia\\
\IEEEauthorrefmark{2}Department of Communications and 
Networking, Aalto University, Espoo 02150, Finland\\
Emails: m.nie@student.unsw.edu.au, d.mishra@unsw.edu.au, azzam.al-nahari@aalto.fi, \\ j.yuan@unsw.edu.au, and riku.jantti@aalto.fi}
}


\maketitle

\begin{abstract}
This paper focuses on secure backscatter transmission in the presence of a passive multi-antenna eavesdropper through a symbiotic radio (SR) network. Specifically, a single-antenna backscatter device (BD) aims to transmit confidential information to a primary receiver (PR) by using a multi-antenna primary transmitter's (PT) signal, where the received symbols are jointly decoded at the PR. Our objective is to achieve confidential communications for BD while ensuring that the primary system's quality of service (QoS) requirements are met. We propose an alternating optimisation algorithm that maximises the achievable secrecy rate of BD by jointly optimising primary transmit beamforming and power sharing between information and artificial noise (AN) signals. Numerical results verify our analytical claims on the optimality of the proposed solution and the proposed methodology's underlying low complexity. Additionally, our simulations provide nontrivial design insights into the critical system parameters and quantify the achievable gains over the relevant benchmark schemes. 
\end{abstract}


%
\IEEEpeerreviewmaketitle 
\vspace{-1mm}
\section{Introduction}
\vspace{-1mm}
Backscatter communication may provide a viable solution for future energy-efficient and affordable Internet-of-things (IoT) devices, as recognised by developing technology experts~\cite{mishra2018optimizing}. Recently, a new technology called Symbiotic Radio (SR) has been proposed as a means to achieve spectrum-sharing efficiency and reliable communications for IoT transmissions. In SR, passive backscatter devices (BD) \cite{9786078} use the ambient backscatter (AmBC) scheme to ride over the received signals from the licensed transmitter~\cite{Symbiotic_radio1}. By sharing the same receiver with the primary link, BD transmissions can avoid interference, allowing for reliable transmissions through joint decoding of the primary and backscatter transmissions, unlike cognitive radio (CR)~\cite{Symbiotic_radio2}. However, due to the low-cost BDs that can be attached to every physical object and the spectrum-sharing nature, malicious attacks on the BD tags can lead to data interception and privacy breaches \cite{Secure_communication}. Therefore, securing backscatter communication systems is a critical design issue. It has been discovered that Physical Layer Security (PLS) provides simpler security algorithms compared to cryptographic schemes~\cite{yang2015safeguarding}. This is crucial considering the size, cost, and computation limitations.
\begin{figure}
    \centering
    \includegraphics[width=2.6in]{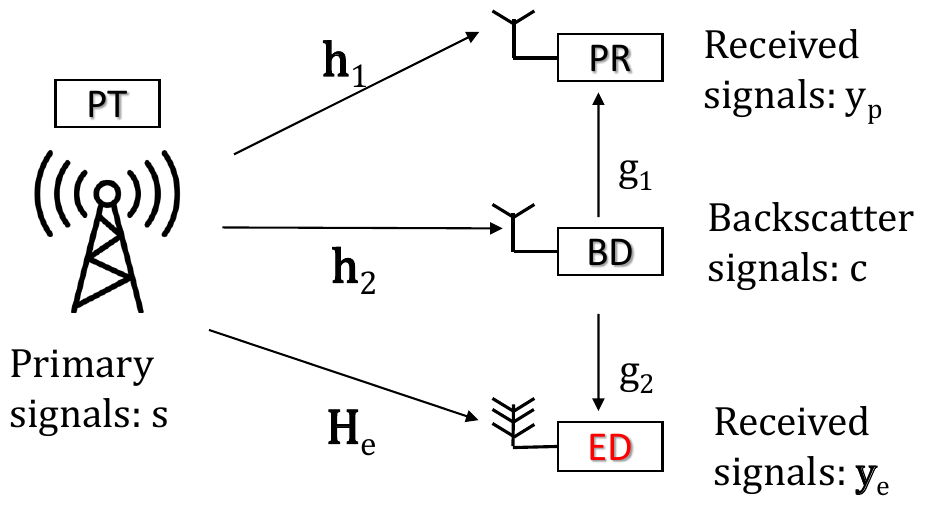}
    \vspace{-3mm} 
    \caption{SR model consists of PR, PT, BD and ED, where the BD backscatters its information to PR and the ED tries to decode the information of the BD.}
    \label{fig:system_model}
    \vspace{-2mm} 
\end{figure}

\vspace{-0.5mm}
\subsection{State-of-the-Art}
Studies on backscatter PLS can be categorized into two groups. The first group \cite{yang2020exploiting,shahzad2019covert,hassanieh2015securing} focuses on modifying the unmodulated carrier signal's properties through PLS to improve the decoding error rate for eavesdroppers. For instance, in \cite{yang2020exploiting}, authors used randomized continuous waves (CW) to achieve secure transmissions, where the secrecy rate is optimized by adjusting the CW's critical parameters. In contrast, \cite{shahzad2019covert} used a noise-like signal with varying power to enable covert backscatter communication. \cite{hassanieh2015securing} explored how randomized modulation and wireless channels can shield commercial RFID tags from eavesdropping when the reader lacked multiple antenna capacity, but the eavesdropper did not.

The second group \cite{saad2014physical,zhao2020safeguarding,yang2016physical} uses artificial noise (AN) or interference injection to decrease eavesdropper signal-to-noise-ratio (SNR). In \cite{saad2014physical}, AN signals are injected into conventional CW signals, with optimized power allocation between AN and CW signals. Similarly, \cite{zhao2020safeguarding} proposed AN-aided CW signals for secure backscatter transmission in the presence of proactive eavesdroppers. \cite{yang2016physical} investigated AN injection precoding strategy for secure MIMO backscatter communications, while \cite{saad2014physical,zhao2020safeguarding} considered single antennas and tags. Unlike the works above, \cite{li2019secure} suggested secure multiuser SR transmissions by incorporating non-orthogonal multiple access (NOMA) and optimizing corresponding beamforming vectors. \cite{li2021physical} conducted an outage and intercept probability analysis for a multiuser C-AmBC network, with single antennas considered at all nodes. Finally, \cite{li2023physical} proposed three physical layer authentication schemes for the AmBC-aided NOMA symbiotic network regarding the variations of authentication tags.

\vspace{-1mm}
\subsection{Motivation and Contributions}
\vspace{-1mm}
In this paper, compared to the existing works~\cite{yang2020exploiting,shahzad2019covert,hassanieh2015securing,zhao2020safeguarding,yang2016physical,saad2014physical,li2019secure,li2021physical,li2023physical}, we investigate the secure backscatter transmissions in multi-antenna SR systems by AN injection along with transmit beamforming. This work provides novel engineering design insights on optimal transmissions for secure SR networking in the presence of eavesdropping attackers. The main contributions are summarized next.
\begin{itemize}[leftmargin=*]
    \item We propose a secure transmission scheme for the multi-antenna SR system that takes into account quality of service (QoS) and employs AN injection. The scheme is designed to protect against a passive multi-antenna eavesdropper attempting to decode information sent by a passive single-antenna BD. We explore the proposed secure transmission scheme's performance bounds and robustness aspects. 
    \item We proposed to maximize the secure rate of backscatter communications by jointly optimizing the sources of the multi-antenna primary transmitter under the QoS requirements of the primary system. Specifically, we optimized the primary transmit beamforming and the power allocation between signal transmission and AN injection.
    \item Since the optimization is a non-convex problem, we developed a low-complexity alternating optimization algorithm with fast convergence speed. We also conducted a complexity analysis for this algorithm. Here, we have developed semi-closed form expressions for optimal solutions, which offer new insights for design. 
    \item  Numerical results verify our analytical claims regarding optimality and fast convergence with low complexity. We also provide optimal design insights on power allocation and beamforming vectors. Lastly, we conduct a performance comparison study where the proposed scheme is shown to outperform the relevant benchmark schemes.
\end{itemize}
$\mathbf{\textit{Notations:}}$ We define $[x]^+ \triangleq \max(0,x)$. Note that $\lvert \cdot \rvert$ and $\lVert \cdot \rVert$ are the absolute operation and Euclidean norm, respectively. We denote $\mathbf{h}^{T}$ and $\mathbf{h}^{\dag}$ as the transpose and complex conjugate transpose of $\mathbf{h}$, respectively. $\mathbf{I}_{N}$, $\mathbf{0}_{N}$, and $\mathbf{1}_{N}$ denote the $N\times N$ identity matrix, the all-zero column vector of length $N$, and the all-one column vector of length $N$, respectively. $\mathbf{x} \thicksim \mathcal{CN}(\mathbf{0}_{N},\pmb{\Sigma})$ indicates that $\mathbf{x} \in \mathbb{C}^{N\times 1}$ is the circularly symmetric complex Gaussian vector with zero-mean and covariance matrix $\pmb{\Sigma}$. Note that $\rm{v}_{max}\{\mathbf{M}\}$ represents the generalized principal eigenvector corresponding to maximum eigenvalue $\rm{\lambda}_{max}\{\mathbf{M}\}$ of matrix $\mathbf{M}$.

\section{System and Channel Models}\label{sec:System and Channel Models}
\subsection{SR Setup and Channel Model}
We consider an SR network, as shown in Fig.~\ref{fig:system_model}, which consists of a primary transmitter (PT) with $N_t$ antennas, a single-antenna backscatter device (BD), a single-antenna primary receiver (PR), and a multi-antenna eavesdropper (ED) with $N_e$ antennas. Note that $\mathbf{h}_1 \thicksim \mathcal{CN}(\mathbf{0}_{N_t},\sigma_s^2\mathbf{I}_{N_t})$ represents the channel fading vector from the PT to PR, $\mathbf{h}_2 \thicksim \mathcal{CN}(\mathbf{0}_{N_t},\sigma_c^2\mathbf{I}_{N_t})$ is the channel fading vector from the PT to BD, and $\mathbf{H}_e \thicksim \mathcal{CN}(\mathbf{0}_{N_tN_e},\sigma_e^2\mathbf{I}_{N_tN_e})$ is $N_e\times N_t$ channel fading matrix from the PT to ED. In this paper, we assume that $\mathbf{h}_1$ and $\mathbf{h}_2$ are available at the PT, which is commonplace in the literature~\cite{mishra2019optimal}, where the CSI can be obtained by channel reciprocity in time-division duplexing (TDD) systems.


\subsection{Transmission Signal Analysis}
The PT transmits a primary information symbol $s$ to the PR (primary link). Meanwhile, the BD transmits a secondary information signal $\sqrt{\alpha}c$ by riding over the PT signals (secondary link), where $\alpha$ denotes the reflection coefficient. We assume that the polyphase coding scheme is employed by the PT, i.e., $|s|^2=1$, and the Gaussian codebook is employed by the BD, i.e., 
$c\sim\mathcal{CN}(0,1)$. Note that this is commonplace for parasitic setup in~\cite{guo2019cooperative,kang2018riding}, where the target is to maximize the achievable rate of the secondary system. Moreover, the PT uses $N_t-2$ degrees of freedom for transmitting AN vector $\mathbf{z} = [z_1 z_2 \cdots z_{N_t-2}]^{T} \thicksim \mathcal{CN}(\mathbf{0}_{N_t-2},\mathbf{I}_{N_t-2})$. Thus, in this system model, $N_t > 2$. The transmitted signal at the PT is
\begin{align}
\label{trans_sig}
\mathbf{x} = \sqrt{p}\mathbf{w}s+\sqrt{q}\mathbf{W}\mathbf{z}=\sqrt{p}\mathbf{w}s+\sqrt{q}\sum_{i=1}^{N_t-2}\mathbf{w}_iz_i
\end{align}
where $p$ and $q$ are the transmitted power of the information and jamming signals, respectively, and $\mathbf{w}\in \mathbb{C}^{N_t\times1}$ is the normalized beamforming vector of the information signal, i.e., $\lVert \mathbf{w}\rVert = 1$. Moreover, the total transmitted power $P$ is constrained such that $\lVert \mathbf{x}\rVert^2 = P$. $\mathbf{W} = [\mathbf{w}_1 \mathbf{w}_2 \cdots \cdots \mathbf{w}_{N_t-2}] \in \mathbb{C}^{N_t\times(N_t-2)}$ is the precoding matrix of the jamming signal $\mathbf{z}$ with column normalization $\lVert \mathbf{w}_i\rVert = 1, \forall i$. In this paper, we design the AN to be completely suppressed at the PR, which leads to the precoding matrix $\mathbf{W}$ of the AN to lie in the null space of the channels $\mathbf{h}_1$ and $\mathbf{h}_2$, i.e., $\mathbf{h}_1^{\dag}\mathbf{W} = \mathbf{0}^T_{N_t-2}$ and $\mathbf{h}_2^{\dag}\mathbf{W} = \mathbf{0}^T_{N_t-2}$.  The null-space-based AN design will degrade the eavesdropping channels but not the legitimate channels to facilitate the secure transmission design~\cite{Beamforming_with}. We assume that $N_t > N_e$ because the eavesdropper cannot eliminate the AN term in (\ref{trans_sig}) with this condition \cite{secure_trans}. Considering $0 \le \phi \le 1$ denote the fraction of power devoted to the information signal, the transmitter powers $p$ and $q$ are given by
\begin{align}
    p &= \phi P,\label{p}\\
    q &= \frac{(1-\phi)P}{N_t-2}.\label{q}
\end{align}
Therefore, the received signal at the PR is given as
\begin{equation}\label{received-sig}
    y_p = \sqrt{p}\mathbf{h}_1^\dag\mathbf{w}s+\sqrt{p}\sqrt{\alpha}cg_1\mathbf{h}_2^\dag\mathbf{w}s+n_p,
\end{equation}
where the first term of the right-hand side in (\ref{received-sig}) is the received signals from the primary link, the second term is from the secondary link, $g_1\thicksim \mathcal{CN}(0,1)$ is the channel coefficient of the BD-PR link known at PT, and $n_p \thicksim \mathcal{CN}(0,1)$ represents the additive white Gaussian noise (AWGN) at PR. The received signals at the ED are given by
\begin{align}
    \mathbf{y}_e &=\sqrt{p}\mathbf{H}_e\mathbf{w}s+\sqrt{p}\sqrt{\alpha}cg_2\mathbf{1}_{N_e}\mathbf{h}_2^\dag\mathbf{w}s\nonumber\\
    &+\sqrt{q}\sum_{i=1}^{N_t-2}\mathbf{H}_e\mathbf{w}_i z_i+\mathbf{n}_e,\label{EDreceived}
\end{align}
where the first term of the right-hand side in (\ref{EDreceived}) is the received signals from the PT-ED link, the second term is from the PT-BD-ED link, $g_2\thicksim \mathcal{CN}(0,1)$ is the channel coefficient of the BD-ED link known at PT, and $\mathbf{n}_e \thicksim\mathcal{CN}(\mathbf{0}_{N_e},\mathbf{I}_{N_e})$. 

In this paper, we consider the worst-case scenario, where the ED has zero noise, i.e., $\mathbf{n}_e \to \mathbf{0}_{N_e}$, and the ED can decode the PT signal for the sake of intercepting the BD signal. This will result in an upper bound on the achievable rate of the ED and a lower bound on the secrecy rate~\cite{Beamforming_with}\cite{secure_trans}. Therefore, the received signal at the ED receiver is given as
\begin{equation}\label{eve}
    \Tilde{\mathbf{y}}_e = \sqrt{p}\sqrt{\alpha}cg_2\mathbf{1}_{N_e}\mathbf{h}_2^\dag\mathbf{w}s+\sqrt{q}\sum_{i=1}^{N_t-2}\mathbf{H}_e\mathbf{w}_i z_i.
\end{equation}

\section{Problem Definition}\label{sec:Problem Definition}
\subsection{SNR Analysis and Secrecy Rate Definition}
Here we derive the achievable secrecy rate of the considered system setup. We are considering the parasitic case where the symbol period of the BD is equal to that of the primary system~\cite{Symbiotic_radio1}. So, the BD signal is treated as interference, and the SNR of the primary system is given from (\ref{received-sig}) as
\begin{equation}
    \begin{aligned}
    \label{gamma_s}
    \gamma_s &=\frac{p\lvert \mathbf{h}_1^\dag \mathbf{w} \rvert^2}{p\alpha \lvert g_1 \rvert^2 \lvert \mathbf{h}_2^\dag \mathbf{w} \rvert^2 +1}.
    \end{aligned}
\end{equation}
After decoding the primary link signal $s$ and removing it from the received signal in (\ref{received-sig}) by successive interference cancellation (SIC) technique, the SNR of the BD signal is
\begin{equation}
\label{gamma_c}
    \gamma_{c|s} =  p\alpha\lvert g_1\rvert^2\lvert\mathbf{h}_2^\dag \mathbf{w}\rvert^2.
\end{equation}
We assume the eavesdropper to be aware of $\mathbf{H}_e\mathbf{w},\mathbf{h}_2^\dag\mathbf{w}$, and the correlation matrix $q\mathbf{H}_e\mathbf{W}\mathbf{W}^\dag\mathbf{H}_e^\dag$ of the AN signal to perform the optimal detection that maximizes its SNR $\gamma_{e|s}$~\cite{Robust_beamforming}. Here we define $\mathbf{X} \triangleq \mathbf{H}_e\mathbf{W}\mathbf{W}^\dag\mathbf{H}_e^\dag$ and the SNR at the ED is given as
\begin{equation}
\begin{aligned}
\label{gamma_e}
\gamma_{e|s} = \frac{p\alpha \lvert g_2\rvert^2}{q}\mathbf{w}^\dag\mathbf{h}_2\mathbf{1}_{N_e}^\dag\mathbf{X}^{-1}\mathbf{1}_{N_e}\mathbf{h}_2^\dag\mathbf{w}.
\end{aligned}
\end{equation}
The instantaneous achievable secrecy rate $R_{sec}$ is defined by
\begin{equation}
\begin{aligned}
\label{rate}
R_{sec} = \left[R_c - R_e \right]^+,
\end{aligned}
\end{equation}
where $R_c = \log_2(1+\gamma_{c|s})$ and $R_e = \log_2(1+\gamma_{e|s})$ represent the achievable rates at the backscatter and eavesdropper side, respectively. Here, we expand $R_{sec}$ for later use as follow
\begin{align}
    R_{sec}&= \Big[\log_2\Big(1+p \alpha \lvert g_1 \rvert^2 \lvert \mathbf{h}_2^\dag \mathbf{w} \rvert^2 \Big)- \nonumber\\
    &\ \ \log_2\Big(1+\frac{p \alpha \lvert g_2 \rvert^2}{q}\mathbf{w}^\dag \mathbf{h}_2 \mathbf{1}_{N_e}^\dag \mathbf{X}^{-1} \mathbf{1}_{N_e} \mathbf{h}_2^\dag \mathbf{w}\Big)\Big]^+.\label{R2}
\end{align}

\subsection{Problem Definition of Secrecy Rate Optimization}
Our goal is to maximize the achievable secrecy rate in~(\ref{rate}) in terms of power allocation factor and beamforming vector, subject to the transmitting power and QoS constraints. Thus, the optimization problem is formulated as follows
\begin{equation}
\begin{aligned}
\nonumber
\mathcal{O}_{1}: &\max_{\mathbf{w},\phi} R_{sec} = \left[R_c - R_e\right]^+,\ \  \text{subject to:}\\
&(\mathrm{C1}): \lVert\mathbf{w}\rVert^2 \le 1, \ \
(\mathrm{C2}): 0 \le \phi \le 1,\ \ (\mathrm{C3}):\gamma_s \ge \gamma_s^{th},
\end{aligned}
\end{equation}
where $\gamma_s^{th}$ is the minimum QoS requirement for PT in terms of SNR. Note that the constraint $(\mathrm{C1})$ is convex~\cite{mishra2019sum,mishra2019multi}, $(\mathrm{C2})$ is linear, and $(\mathrm{C3})$ is linear with $\phi$ as $\frac{\partial\gamma_s}{\partial\phi}=\frac{P\lvert\mathbf{h}_1^\dag \mathbf{w} \rvert^2}{(p\alpha \lvert g_1\rvert^2\lvert \mathbf{h}_2^\dag \mathbf{w}\rvert^2+1)^2} > 0$. However, $(\mathrm{C3})$ is nonconvex due to $\frac{\partial^2\gamma_s}{\partial\mathbf{w}^2}<0$ and the coupling between $\mathbf{w}$ and $\phi$ in $\gamma_s$ in~(\ref{gamma_s}). Thus, $\mathcal{O}_{1}$ is a nonconvex problem because both constraint $(\mathrm{C3})$ and $R_{sec}$ include the coupling terms between $\mathbf{w}$ and $\phi$~\cite{mishra2019sum}.

 
\section{Proposed Secrecy Rate Optimization}\label{sec:Full CSI solution}
\vspace{-1mm}
Here we propose the optimal solution for the problem $\mathcal{O}_{1}$ by alternately optimizing $\phi$ and $\mathbf{w}$. Specifically, we investigate the optimal information beamforming vector $\mathbf{w}$ for a given power allocation factor $\phi$ and optimal $\phi$ for a given $\mathbf{w}$ in the following two subsections. In this way, the optimal $\mathbf{w}$ and $\phi$ can be obtained through alternating and iterative updates. 
\subsection{Optimal $\mathbf{w}$ for a Given $\phi$}\label{sec:full CSI optimal w for given phi}
The problem of optimal $\mathbf{w}$ that maximizes the achievable secrecy rate for a given $\phi$ can be defined as
\begin{align}
    \mathcal{O}_{1.1}: &\max_{\mathbf{w}} R_{sec},\ \  \text{subject to:} \ (\mathrm{C1}), (\mathrm{C3}).\nonumber
\end{align}
\subsubsection{Feasible analysis}\label{sec:feasible_full}
 Before investigating the optimal solution of $\mathcal{O}_{1.1}$, we discuss the feasibility condition of $(\mathrm{C3})$ by finding the maximum achievable SNR $\gamma_s^{\max}$ at the PR. We start by rewriting the $\gamma_s$ in (\ref{gamma_s}) and $(\mathrm{C3})$ in simplified form.
\begin{lemma}
     $\gamma_s$ can be rewritten and simplified as
\end{lemma}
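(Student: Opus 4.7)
The plan is to recast $\gamma_s$ in~(\ref{gamma_s}) as a generalized Rayleigh quotient in $\mathbf{w}$, so that both the numerator and denominator are Hermitian quadratic forms and $(\mathrm{C3})$ collapses to a single homogeneous quadratic inequality. The only term obstructing a purely algebraic rewrite is the additive $1$ in the denominator of~(\ref{gamma_s}); I would handle this by first arguing that, without loss of optimality, one may take $\lVert\mathbf{w}\rVert^2 = 1$, which turns the $1$ into $\mathbf{w}^\dag \mathbf{I}_{N_t}\mathbf{w}$ and restores homogeneity.

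Concretely, I would first exploit the identity $\lvert\mathbf{h}_i^\dag \mathbf{w}\rvert^2 = \mathbf{w}^\dag \mathbf{h}_i \mathbf{h}_i^\dag \mathbf{w}$ for $i=1,2$ to lift every $\mathbf{w}$-dependent factor into quadratic form. Next, for any feasible $\mathbf{w}$ with $\lVert\mathbf{w}\rVert < 1$, scaling $\mathbf{w}\mapsto \mathbf{w}/\lVert\mathbf{w}\rVert$ preserves $(\mathrm{C3})$ and weakly increases $R_{sec}$: parameterising by $t\triangleq \lVert\mathbf{w}\rVert$ and direction $\hat{\mathbf{w}}$, both $\gamma_{c|s}$ and $\gamma_{e|s}$ are proportional to $t^2$, and a short monotonicity check shows $\log_2(1+t^2 a) - \log_2(1+t^2 b)$ is non-decreasing in $t$ whenever $a\ge b$, which is exactly the $R_{sec}\ge 0$ regime of interest. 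After this unit-norm reduction, the denominator constant equals $\mathbf{w}^\dag \mathbf{I}_{N_t}\mathbf{w}$, so
\[ \gamma_s \;=\; \frac{\mathbf{w}^\dag \mathbf{A}\,\mathbf{w}}{\mathbf{w}^\dag \mathbf{B}\,\mathbf{w}}, \]
with $\mathbf{A} \triangleq p\,\mathbf{h}_1\mathbf{h}_1^\dag$ and $\mathbf{B} \triangleq p\alpha\lvert g_1\rvert^2 \mathbf{h}_2\mathbf{h}_2^\dag + \mathbf{I}_{N_t}$ Hermitian positive definite. Cross-multiplying $\gamma_s \ge \gamma_s^{th}$ then yields the simplified $(\mathrm{C3})$: $\mathbf{w}^\dag\bigl(\mathbf{A} - \gamma_s^{th}\mathbf{B}\bigr)\mathbf{w} \ge 0$, a single indefinite quadratic inequality that is linear in the rank-one lift $\mathbf{w}\mathbf{w}^\dag$.

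The main step demanding care is the unit-norm reduction, since afterwards the manipulations are elementary. A useful by-product of this rewrite is that the feasibility of $(\mathrm{C3})$ reduces to checking that the maximum generalized eigenvalue of the pencil $(\mathbf{A},\mathbf{B})$ is at least $\gamma_s^{th}$, with the maximising direction given by the corresponding principal generalized eigenvector — precisely the $\mathrm{v}_{\max}\{\cdot\}$ object flagged in the paper's notation and, I anticipate, the quantity that will drive the semi-closed-form solution developed later in Section~\ref{sec:Full CSI solution}.
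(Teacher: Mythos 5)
Your proposal is correct and takes essentially the same route as the paper: the rank-one identity $\lvert\mathbf{h}_i^\dag\mathbf{w}\rvert^2=\mathbf{w}^\dag\mathbf{h}_i\mathbf{h}_i^\dag\mathbf{w}$ produces exactly the paper's $\mathbf{G}_1=\phi P\,\mathbf{h}_1\mathbf{h}_1^\dag$ and $\mathbf{G}_2=\alpha\phi P\lvert g_1\rvert^2\mathbf{h}_2\mathbf{h}_2^\dag+\mathbf{I}_{N_t}$, and your observation that feasibility of $(\mathrm{C3})$ reduces to the principal generalized eigenvalue of the pencil is precisely what the paper does in~(\ref{w_e1}). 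Your explicit scaling argument for the unit-norm reduction (that $\mathbf{w}\mapsto\mathbf{w}/\lVert\mathbf{w}\rVert$ preserves $(\mathrm{C3})$ and weakly increases $R_{sec}$) is sound and in fact more careful than the paper, which simply inherits $\lVert\mathbf{w}\rVert=1$ from its system model when replacing the additive $1$ by $\mathbf{w}^\dag\mathbf{I}_{N_t}\mathbf{w}$.
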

\begin{align}\label{simplified1}
    \gamma_{s} =\frac{\mathbf{w}^\dag \mathbf{G}_1 \mathbf{w}} {\mathbf{w}^\dag \mathbf{G}_2 \mathbf{w}},
\end{align}
where $\mathbf{G}_1 = \phi P \mathbf{h}_1 \mathbf{h}_1^\dag$ and $\mathbf{G}_2 =\alpha \phi P \lvert  g_1 \rvert^2 \mathbf{h}_2 \mathbf{h}_2^\dag +\mathbf{I}_{N_t}$ are both symmetry matrix.
\begin{proof}
Note that $\lvert \mathbf{h}_1^\dag \mathbf{w} \rvert^2=\mathbf{h}_1^\dag\mathbf{w}\mathbf{w}^\dag\mathbf{h}_1=\mathbf{w}^\dag\mathbf{h}_1\mathbf{h}_1^\dag\mathbf{w}$. Thus, $\gamma_{s}$ can be written as $\gamma_{s}=\frac{ \phi P \mathbf{w}^\dag \mathbf{h}_1 \mathbf{h}_1^\dag \mathbf{w}} {\mathbf{w}^\dag( \alpha \phi P \lvert  g_1 \rvert^2 \mathbf{h}_2 \mathbf{h}_2^\dag+\mathbf{I}_{N_t})\mathbf{w}}=\frac{\mathbf{w}^\dag \mathbf{G}_1 \mathbf{w}} {\mathbf{w}^\dag \mathbf{G}_2 \mathbf{w}}.\hfill\blacksquare$
\end{proof} 

We can observe that $\gamma_s$ in (\ref{simplified1}) is a generalized Rayleigh quotient. Thus, the global optimal beamforming vector $\mathbf{w}_{{e_1}}$ that maximizes $\gamma_{s}$ can be obtained by the generalized principal eigenvector of the matrix set $(\mathbf{G}_1,\mathbf{G}_2)$ as~\cite{saini2022irs}
\begin{equation}\label{w_e1}
\begin{aligned}
\mathbf{w}_{{e_1}}=\rm{v}_{max}\{(\mathbf{G}_1,\mathbf{G}_2)\}.
\end{aligned}
\end{equation}
The maximum SNR $\gamma_{s}^{\max}$ for given $\phi$ can be obtained by substituting $\mathbf{w}_{{e_1}}$ in (\ref{simplified1}). Thus, $\mathcal{O}_{1.1}$ is feasible if $\gamma_{s}^{max} \ge \gamma_s^{th}$.

\subsubsection{Proposed Optimal Solution of $\mathbf{w}$}
In order to solve $\mathcal{O}_{1.1}$, we rewrite and simplify the achievable secrecy rate $R_{sec}$ in (\ref{R2}) by treating $\phi$ as a constant and write in terms of $\mathbf{w}$ as
\begin{align}\label{simplified2}
R_{sec}&= \Big[\log_2\Big(1+p \alpha \lvert g_1 \rvert^2 \mathbf{w}^\dag \mathbf{h}_2 \mathbf{h}_2^\dag \mathbf{w} \Big) \nonumber\\
&\ \ -\log_2\Big(1+\frac{p \alpha \lvert g_2 \rvert^2}{q}\mathbf{w}^\dag \mathbf{h}_2 \mathbf{1}_{N_e}^\dag \mathbf{X}^{-1} \mathbf{1}_{N_e} \mathbf{h}_2^\dag \mathbf{w}\Big)\Big]^+,\nonumber\\
&= \Bigg[\log_2\Bigg(   \mathbf{w}^\dag\Big(\mathbf{I}_{N_t}+p \alpha \lvert g_1 \rvert^2  \mathbf{h}_2 \mathbf{h}_2^\dag \Big)\mathbf{w} \Bigg) \nonumber\\
&\ \ -\log_2\Bigg( \mathbf{w}^\dag\Big(\mathbf{I}_{N_t}+\frac{p \alpha \lvert g_2 \rvert^2}{q} \mathbf{h}_2 \mathbf{1}_{N_e}^\dag \mathbf{X}^{-1} \mathbf{1}_{N_e} \mathbf{h}_2^\dag \Big)\mathbf{w}\Bigg)\Bigg]^+,\nonumber\\
&=\left[\log_2 \left( \frac{\mathbf{w}^\dag\mathbf{G}_{3}\mathbf{w}}
{\mathbf{w}^\dag\mathbf{G}_{4}\mathbf{w}}\right)\right]^+,
\end{align}
where $\mathbf{G}_{3}=\mathbf{I}_{N_t}+p \alpha \lvert g_1 \rvert^2  \mathbf{h}_2 \mathbf{h}_2^\dag$ and 
$\mathbf{G}_{4}=\mathbf{I}_{N_t}+\frac{p \alpha \lvert g_2 \rvert^2}{q} \mathbf{h}_2 \mathbf{1}_{N_e}^\dag \mathbf{X}^{-1} \mathbf{1}_{N_e} \mathbf{h}_2^\dag$. This $R_{sec}$ in~(\ref{simplified2}) is a generalized Rayleigh quotient. Thus, the optimal beamforming vector $\mathbf{w}_{e_2}$ that maximizes $R_{sec}$ in (\ref{simplified2}) \emph{without constraints} is the generalized principal eigenvector of matrix set $(\mathbf{G}_3,\mathbf{G}_4)$ as~\cite{saini2022irs}
\begin{equation}\label{w_e2}
\begin{aligned}
\mathbf{w}_{e_2}=\rm{v}_{max}\{(\mathbf{G}_3,\mathbf{G}_4)\}.
\end{aligned}
\end{equation}

After investigating the optimal beamforming vector that maximizes the achievable secrecy rate without considering QoS constraint, it is crucial to strike a balance between the secrecy rate maximization and QoS requirement. Thus, we propose a weighted combination of $\mathbf{w}_{e_1}$ and $\mathbf{w}_{e_2}$ as follow
\begin{equation}
\begin{aligned}\label{wgt1}
\mathbf{w}_{c_1} = \frac{\lambda_1 \mathbf{w}_{e_1} + (1-\lambda_1)\mathbf{w}_{e_2}}{\lVert \lambda_1 \mathbf{w}_{e_1} + (1-\lambda_1)\mathbf{w}_{e_2} \rVert},
\end{aligned}
\end{equation}
where $\lambda_1$ is the weighting factor and varies in $d$ discrete steps uniformly, resulting in the allocation as $\{ 0,\frac{1}{d},\frac{2}{d}, \cdots , \frac{d-1}{d},1 \}$. Note that $\mathbf{w}_{e_1}$ in~(\ref{w_e1}) and $\mathbf{w}_{e_2}$ in~(\ref{w_e2}) are two extremes that maximize the received SNR at PR and unconstrained secrecy rate, respectively, and the optimal $\mathbf{w}_{c_1}$ in~(\ref{wgt1}) balances between those extremes. Here $d$ is chosen based on the tradeoff between the computational complexity and the desired solution quality. To compute the optimal $\mathbf{w}_{c_1}$, we need to evaluate $R_{sec}$ and $\gamma_s$ for all $\lambda$ weights and then choose the maximum constrained secrecy rate among them.

\subsection{Optimal $\phi$ for a Given $\mathbf{w}$}\label{sec:full CSI optimal phi for given w}
For a given $\mathbf{w}$, the problem of optimal $\phi$ that maximizes the achievable secrecy rate, subject to total power constraint $(\mathrm{C2})$, can be defined as
\begin{align}
    \mathcal{O}_{1.2}: &\max_{\phi} R_{sec},\ \  \text{subject to:} \ (\mathrm{C2}), (\mathrm{C3}).\nonumber
\end{align}
Note that we assume $\phi\neq 0$ because the PT has to send the information signals, and $\phi\neq 1$ because the eavesdropper will obtain an infinity rate, and the secure rate will be 0. Thus, the power factor is chosen as $0 < \phi < 1$. To obtain the solution of $\mathcal{O}_{1.2}$, we next rewrite $R_{sec}$ in (\ref{R2}) by treating $\mathbf{w}$ as a constant.
\begin{lemma}  
With $A\triangleq P\alpha \lvert g_1 \rvert^2 \lvert \mathbf{h}_2^\dag \mathbf{w} \rvert^2$, optimal $\phi$ is given as 
\begin{align}\label{full CSI optimal phi}
    \phi &= \frac{A-\sqrt{AB(A-B+1)}}{A-AB}.
\end{align}
where $B\triangleq (N_t-2)\alpha \lvert {g}_2 \rvert^2 \mathbf{w}^\dag \mathbf{h}_2 \mathbf{1}_{N_e}^\dag \mathbf{X}^{-1} \mathbf{1}_{N_e} \mathbf{h}_2^\dag \mathbf{w}$.
\end{lemma}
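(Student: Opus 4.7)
The plan is to reduce the secrecy rate in~(\ref{R2}) to a single-variable function of $\phi$ by substituting $p=\phi P$ and $q=(1-\phi)P/(N_t-2)$ from~(\ref{p})--(\ref{q}), and then locate the unique stationary point in $(0,1)$. After substitution, $p\alpha|g_1|^2|\mathbf{h}_2^\dag\mathbf{w}|^2 = \phi A$ and $(p\alpha|g_2|^2/q)\mathbf{w}^\dag\mathbf{h}_2\mathbf{1}_{N_e}^\dag\mathbf{X}^{-1}\mathbf{1}_{N_e}\mathbf{h}_2^\dag\mathbf{w} = \phi B/(1-\phi)$. Using $1+\phi B/(1-\phi) = (1+\phi(B-1))/(1-\phi)$ and dropping the $[\cdot]^+$ operator for the interior analysis, the objective collapses to
\begin{equation*}
R_{sec}(\phi) = \log_2(1+\phi A) + \log_2(1-\phi) - \log_2(1+\phi(B-1)).
\end{equation*}

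Next, I would set $dR_{sec}/d\phi=0$ and multiply through by $\ln 2$ to get $\frac{A}{1+\phi A} = \frac{1}{1-\phi} + \frac{B-1}{1+\phi(B-1)}$. The right-hand side collapses via the identity $\frac{1}{1-\phi} + \frac{B-1}{1+\phi(B-1)} = \frac{B}{(1-\phi)(1+\phi(B-1))}$, which is obtained by placing both fractions over a common denominator and observing that the $\phi(B-1)$ contributions cancel. Cross-multiplying yields $A(1-\phi)(1+\phi(B-1)) = B(1+\phi A)$. Expanding and collecting by powers of $\phi$ produces the quadratic
\begin{equation*}
A(B-1)\phi^{2} + 2A\phi + (B-A) = 0.
\end{equation*}
Applying the quadratic formula with the discriminant simplification $A-(B-1)(B-A)=B(A-B+1)$ yields the two candidate stationary points $\phi = [A \mp \sqrt{AB(A-B+1)}]/(A-AB)$.

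Finally, I would select the feasible root by sign inspection. The coefficient $A(B-1)$ changes sign with $B$, so I would split into the cases $B>1$ and $B<1$: in each, using the non-triviality assumption $A>B$ (otherwise $R_{sec}\le 0$ identically and the $[\cdot]^+$ trivializes the problem so the radicand $AB(A-B+1)$ is positive), a direct comparison of numerator and denominator shows the minus-sign root lies in $(0,1)$ while the plus-sign root does not, matching~(\ref{full CSI optimal phi}). Since $R_{sec}(\phi) \to 0$ as $\phi \to 0^+$ and $R_{sec}(\phi)\to-\infty$ as $\phi\to 1^-$ (the $\log_2(1-\phi)$ term diverges while the others stay bounded), continuity forces this unique interior critical point to be the global maximizer on $(0,1)$. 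The main obstacle is precisely this root selection and the associated sign bookkeeping around $A(B-1)$, which flips the orientation of the parabola; the case split above, together with the secrecy-regime assumption $A>B$, is what cleanly resolves it.
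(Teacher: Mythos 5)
Your proposal is correct, and its core is the same as the paper's: identical substitution of $p=\phi P$, $q=(1-\phi)P/(N_t-2)$ to get $R_{sec}=\log_2\bigl((1+\phi A)/(1+\tfrac{\phi}{1-\phi}B)\bigr)$, the same stationarity quadratic $A(B-1)\phi^2+2A\phi+(B-A)=0$ with discriminant simplification $A^2-A(B-1)(B-A)=AB(A-B+1)$, the same positivity condition $A>B$, and the same $B\gtrless 1$ case split to discard the plus-sign root $\phi_2$. Where you diverge is the optimality certificate. The paper argues concavity via a second derivative, $\tfrac{2B(B-A-1)}{((B-1)\phi+1)^3}<0$ (note this displayed expression is actually the second derivative of the SNR ratio \emph{inside} the logarithm, not of $R_{sec}$ itself; it still suffices, since $\log_2$ of a positive concave function is concave), and it only proves $\phi_2$ infeasible, taking the feasibility of $\phi_1$ for granted. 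You instead verify explicitly that $\phi_1\in(0,1)$ in both cases — a point the paper skips — and replace the concavity check with boundary behavior: $R_{sec}\to 0$ as $\phi\to 0^+$ and $R_{sec}\to-\infty$ as $\phi\to 1^-$, so the unique interior critical point is the global maximizer. One small tightening is needed there: with only one critical point and those boundary limits, the function could a priori still be nonincreasing with $\phi_1$ a saddle and its supremum approached (not attained) at $\phi\to 0^+$; to exclude this, note $R_{sec}'(0^+)=(A-B)/\ln 2>0$ under your standing assumption $A>B$, so $R_{sec}$ is strictly positive just to the right of $0$ and the maximum is attained in the interior. That is a one-line addition, after which your boundary argument is fully rigorous and arguably cleaner than the paper's second-derivative computation.
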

\begin{proof}
Firstly,  it is worth noting that $R_{sec}$ can be rewritten in terms of $\phi$ as
\begin{align}\label{phi}
    R_{sec}=\left[\log_2 \frac{\left(1+\phi A\right)}{\left(1+\frac{\phi}{(1-\phi)}B\right)}\right]^+.
\end{align}
Here, the obtained optimal $\phi$ is infeasible if it does not fall within the range $(0,1)$.
    To obtain a positive secrecy rate, we need $1+\phi A > 1+\frac{\phi}{(1-\phi)}B \Rightarrow (1-\phi)A>B$, which leads to $A>B$ as $0<\phi<1$. Note that $R_{sec}$ has two critical points with respect to $\phi$ by taking $\frac{\partial R_{sec}}{\partial\phi} = 0$. The critical points are shown as follows
\begin{align}
\phi_1 &= \frac{A-\sqrt{AB(A-B+1)}}{A-AB},\label{phi_1}\\
\phi_2 &= \frac{A+\sqrt{AB(A-B+1)}}{A-AB},\label{phi_2}
\end{align}
where $A>0$ and $B>0$. Then we take the second-order derivative as $\frac{\partial^2 R_{sec}}{\partial\phi^2} =\frac{2B\left(B-A-1\right)}{\left(\left(B-1\right)\phi+1\right)^3}$,
where its numerator is negative as $A>B$ and its denominator is positive as $(B-1)\phi > -1, \forall B>0$. Thus, $R_{sec}$ is concave and has two maximum values as $\frac{\partial^2 R_{sec}}{\partial\phi^2} < 0$. To select the feasible one among $\phi_1$ and $\phi_2$, we first analyse the case when $A-AB<0 \Rightarrow B>1$. In this case, $\phi_2$ in~(\ref{phi_2}) will always be negative as its numerator is positive. When $A-AB>0 \Rightarrow B<1$, we analysis $\phi_2$ in~(\ref{phi_2}) as follow
\begin{align}
    0<\phi_2<1\Rightarrow& 0<A+\sqrt{AB(A-B+1)}<A-AB,\nonumber\\
    \Rightarrow& -A<\sqrt{AB(A-B+1)}<-AB,\nonumber
\end{align}
where $\sqrt{AB(A-B+1)}<-AB$ is impossible. Thus, $\phi_2$ in~(\ref{phi_2}) is infeasible and only $\phi_1$ in~(\ref{phi_1}) is feasibile.
$\hfill\blacksquare$.
\end{proof}
\subsection{Step-by-Step Algorithm}
Next, we show the step-by-step procedure in Algorithm~\ref{algo1}. Specifically, Algorithm {\ref{algo1}} starts with a given power factor $\phi = 0.5$. Then, we obtain the optimal $\mathbf{w}$ for a given $\phi$ as shown in Section~\ref{sec:full CSI optimal w for given phi}, including SNR feasibility check and computation of the weighting beamforming vector $\mathbf{w}_{c_1}$. Based on the obtained optimal $\mathbf{w}$, we update the power factor as shown in Section~\ref{sec:full CSI optimal phi for given w} and in Algorithm {\ref{algo1}} line 22. Finally, Algorithm {\ref{algo1}} terminates when $(R_{sec}^{(j-1)} - R_{sec}^{(j-2)}) \le \varepsilon$, where $\varepsilon$ is an acceptable tolerance.

\begin{algorithm}
    \caption{\footnotesize Alternating optimization of $\mathbf{w}$ and $\phi$ to maximize $R_{sec}$}
    \label{algo1}
    \begin{algorithmic}[1]\footnotesize
        \REQUIRE $\mathbf{h}_1$, $\mathbf{h}_2$, $\mathbf{H}_e$, $\mathbf{X}$, $P$, $\alpha$, $g_1$, $g_2$, $d$, $\gamma_s^{th}$, $\varepsilon$
        \STATE Set $j\gets 1$, $\phi^{(1)} \gets 0.5$, $R_{sec}^{(1)} \gets 0$
        \REPEAT
        \STATE Obtain $\mathbf{w}_{{e_1}}$ by substituting $\phi\gets\phi^{(j)}$ into (\ref{simplified1}) and (\ref{w_e1})
        \STATE Obtain $\gamma_{s}^{\max}$ by substituting $\mathbf{w}\gets \mathbf{w}_{{e_1}}$ into (\ref{simplified1})
        \IF{$\gamma_s^{th} > \gamma_{s}^{\max}$}
        \PRINT $\mathcal{O}_1$ is not feasible
        \RETURN
        \ELSE
        \STATE Obtain $\mathbf{w}_{e_2}$ by substituting $\phi \gets \phi^{(j)}$ into (\ref{simplified2}), (\ref{w_e2})
        \STATE Set $i \gets 0$
        \FOR{$i \le d$}
        \STATE Set $\lambda_1 = \frac{i}{d}$ in (\ref{wgt1}) and set the resultant as $\mathbf{w}_{c_1}$
        \STATE Substitute $\mathbf{w}_{c_1}$ and $\phi^{(j)}$ into $\gamma_{s}$ in (\ref{simplified1}) and set the resultant as $\gamma_{temp}$
        \STATE Substitute $\mathbf{w}_{c_1}$ and $\phi^{(j)}$ into $R_{sec}^{(j)}$ in (\ref{simplified2}) and set the resultant as $R_{temp}$
        \IF{$\gamma_{temp} \ge \gamma_{s}^{th}$ and $R_{temp}^{(j)} > R_{sec}$}
        \STATE Set $R_{sec}^{(j)} \gets R_{temp}, \mathbf{w}_{opt} \gets \mathbf{w}_{c_1}$
        \ENDIF
        \STATE Set $i\gets i + 1$
        \ENDFOR
        \STATE Set $j \gets j + 1$
        \STATE Substitute $\mathbf{w}\gets\mathbf{w}_{opt}$ into (\ref{phi}) and (\ref{full CSI optimal phi}) and set the resultant as $\phi^{(j)}$
        \ENDIF
        \UNTIL $(R_{sec}^{(j-1)} - R_{sec}^{(j-2)}) \le \varepsilon$
        \STATE $R_{sec}\gets R_{sec}^{(j-1)}$, $\phi_{opt}\gets\phi^{(j-1)}$
        \ENSURE $\phi_{opt}$, $\mathbf{w}_{opt}$, $R_{sec}$
    \end{algorithmic}
\end{algorithm}

\subsection{Complexity Analysis}\label{sec: complexity of full CSI}
We first consider the computational complexity of Section~\ref{sec:full CSI optimal w for given phi}. It is worth noting that the main computational complexity comes from the generalized eigenvectors of the matrix set $(\mathbf{G}_1,\mathbf{G}_2)$ and $(\mathbf{G}_3,\mathbf{G}_4)$ in~(\ref{w_e1}) and~(\ref{w_e2}), respectively. Specifically, the generalized eigenvector problem of two symmetric matrices $(\mathbf{G}_1,\mathbf{G}_2)$ is given as~\cite{parlett1998symmetric}
\begin{align}
    \mathbf{G}_1\mathbf{V}=\mathbf{G}_2\mathbf{V}\mathbf{D},
\end{align}
where $\mathbf{V}$ and $\mathbf{D}$ contain the eigenvectors and eigenvalues, respectively. According to~\cite{faber1997solving}, this problem is solved in MATLAB based on matrix inversion as 
\begin{align}
    \mathbf{G}_2^{-1}\mathbf{G}_1\mathbf{V}=\mathbf{V}\mathbf{D}.
\end{align}
Therefore, the complexity of generalized eigenvectors of two symmetric matrices problem consists of channel inversion and normal eigenvalue computing. Generally, the computational complexity of matrix inversion is $\mathcal{O}\left(N_t^3\right)$~\cite{li2021cross} and normal eigenvalue computing is also $\mathcal{O}\left(N_t^3\right)$~\cite{pan1999complexity}. Thus, the complexity of finding a generalized eigenvector is $\mathcal{O}\left(2N_t^3\right)$. The same approach applied to the matrix set $(\mathbf{G}_3,\mathbf{G}_4)$.

Moreover, the complexity of iterations of finding $\mathbf{w}_{c_1}$ in~(\ref{wgt1}) is $\mathcal{O}\left(d+1\right)$. Note that the complexity of Section~\ref{sec:full CSI optimal phi for given w} is $\mathcal{O}\left(1\right)$. The complexity $\mathcal{O}\left(J\right)$ is due to the iterations of convergence, where $J$ is the iteration number of convergence. Finally, we summarize the computational complexity of Algorithm {\ref{algo1}} as $\mathcal{O}\left(J\left(4N_t^3(d+1)+1\right)\right)$.

\section{Numerical Results}\label{sec:Numerical Results}
Unless otherwise stated, we set $P=48\mathrm{dBm}$, $\gamma_s^{th}=3\mathrm{dB}$, $\gamma_c^{th}=10\mathrm{dB}$, $\alpha=0.3$, $N_t=10$, $N_e=4$, $\varepsilon=10^{-10}$, and $d=100$. We assume $\sigma^2_s=\sigma^2_c=\sigma^2_e=1$. Note that the MATLAB seed is set as $\mathrm{rng}(5)$, and the simulation results are averaged over $10^4$ times channel realization.
\begin{figure}
	\centering  
	\subfigure[Validation of $\phi$ optimization.]{{\includegraphics[scale=0.244]{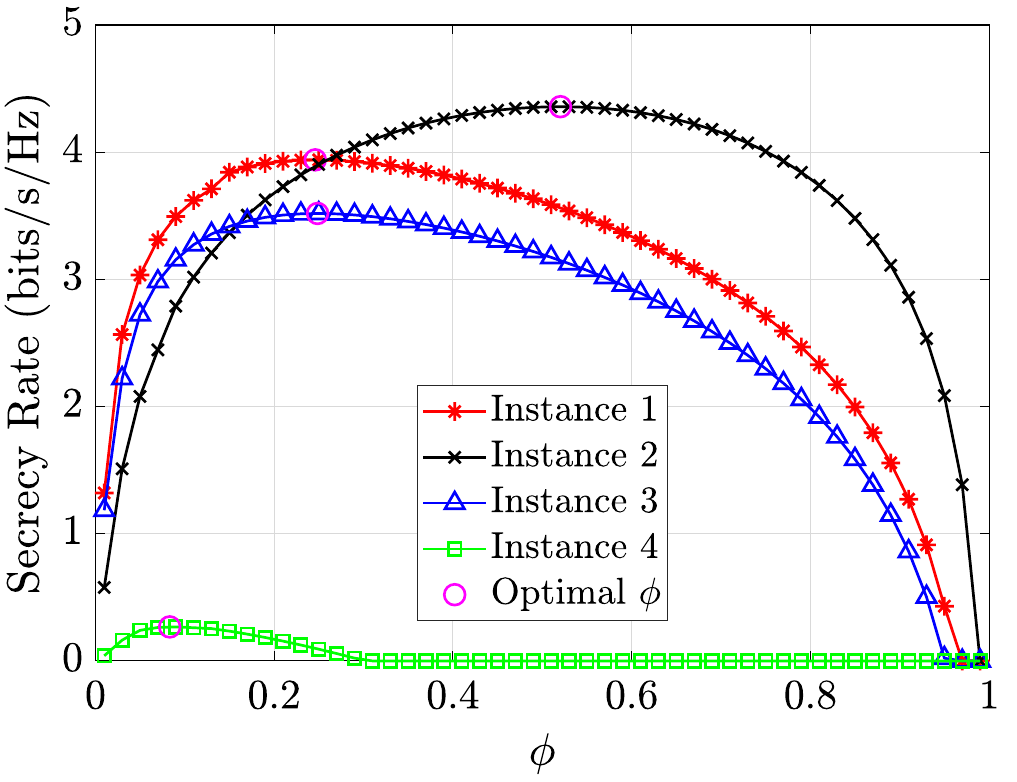} }}
	\subfigure[Validation of $\lambda_1$ optimization.]{{\includegraphics[scale=0.244]{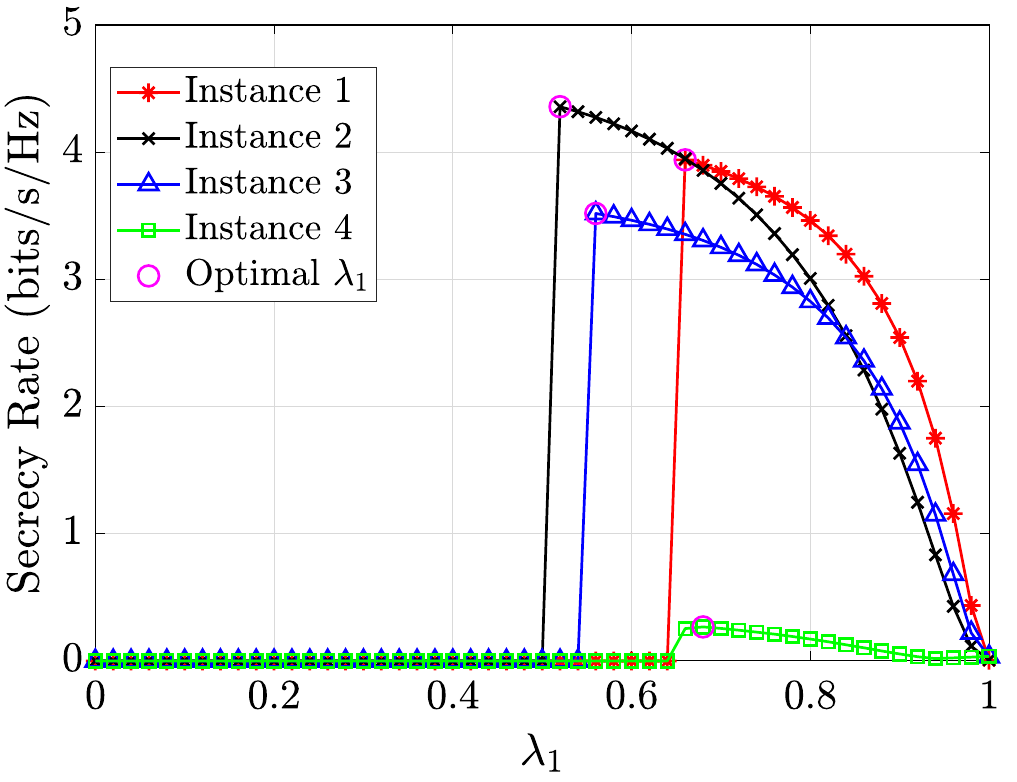}}}
    \caption{Validation of the proposed secrecy rate maximization algorithm} 
        \label{fig:group}
\end{figure}  

\begin{figure}
	\centering   
	\subfigure[Convergence demonstration with increasing number of iterations.]{\includegraphics[scale=0.244]{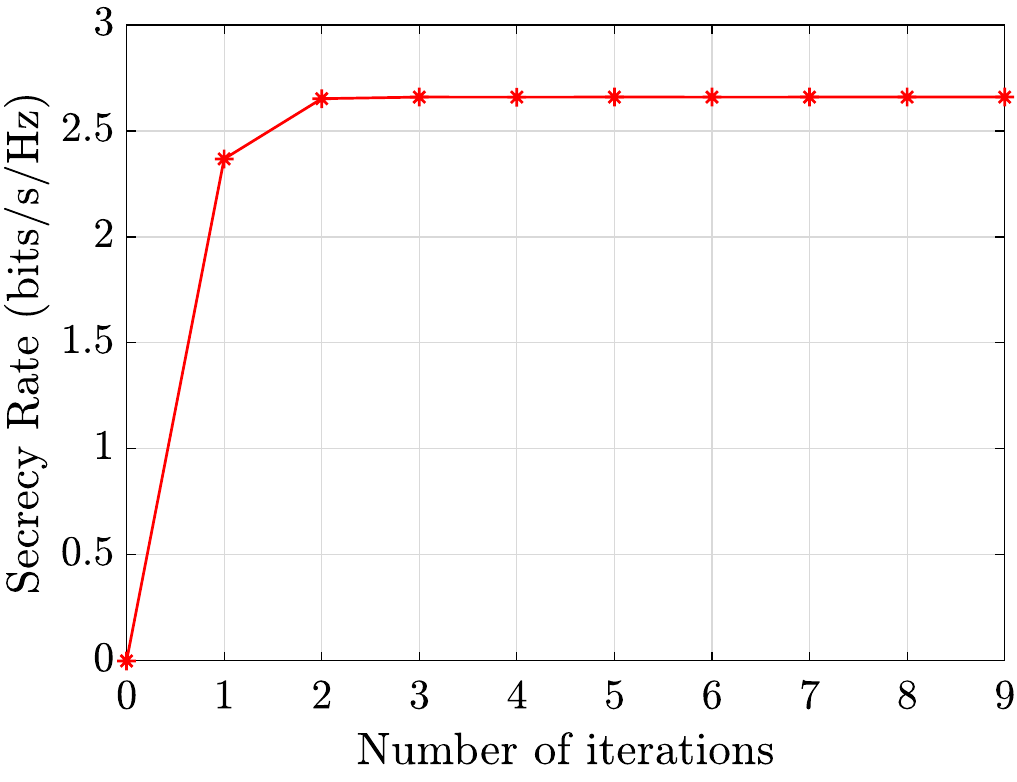}}
	\subfigure[Convergence demonstration with more sensitive tolerance $\varepsilon$.]{{\includegraphics[scale=0.244]{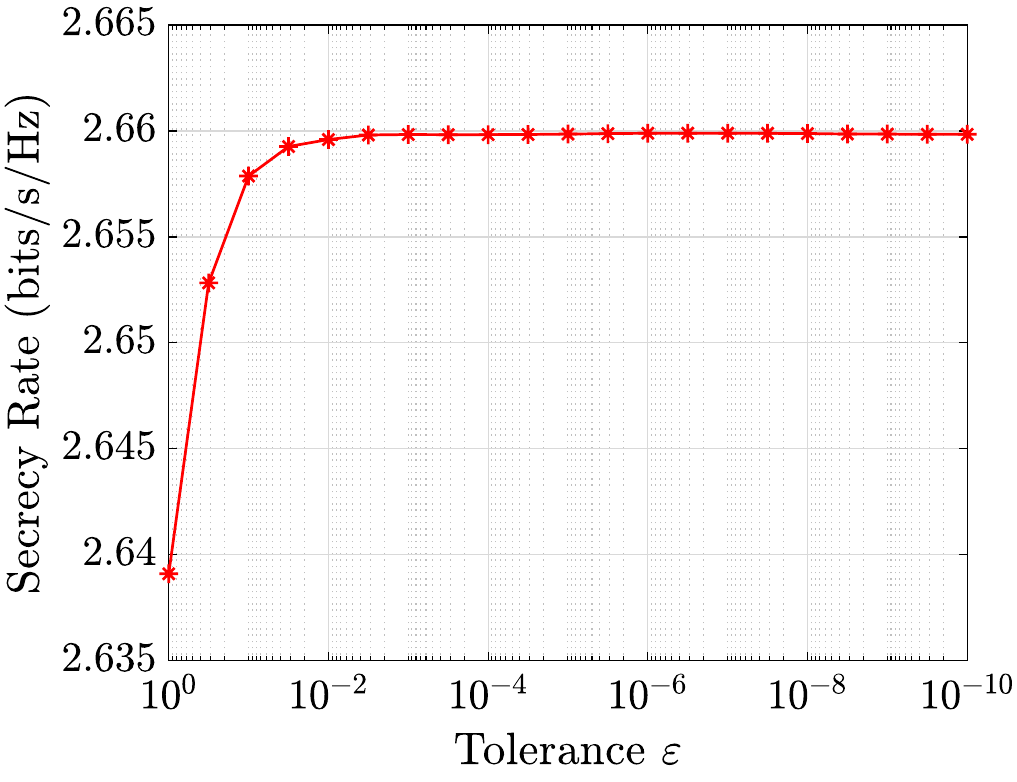}}}
    \caption{Convergence of the proposed secrecy rate maximization algorithm} 
        \label{fig:convergence}
\end{figure}  
\begin{figure}[!h]
    \centering
    \includegraphics[width=1.85in]{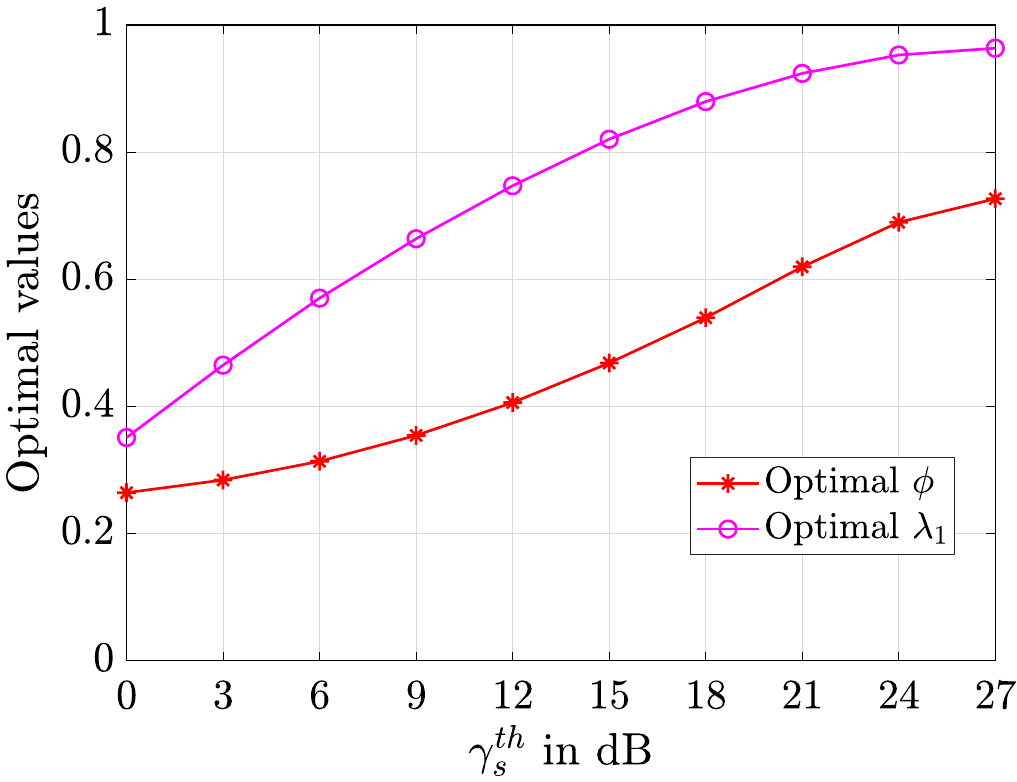}
    \caption{Optimal $\phi$ and $\lambda_1$ versus $\gamma_s^{th}$}
    \label{fig:insight phi lambda}
\end{figure}
First, we present the validation plots for the proposed algorithm in Fig.~\ref{fig:group}. Specifically, Fig.~\ref{fig:group}(a) and Fig.~\ref{fig:group}(b) show the secrecy rate against $\phi$ and $\lambda_1$ for four individual instances, respectively, where the optimal values for $\phi$ and $\lambda_1$ obtained by algorithm \ref{algo1} that maximize the secrecy rate  are denoted by magenta circles. We can observe that the globally optimal values match well with the search simulation results for all instances. Note that the secrecy rate is almost zero when $\lambda_1$ is approximately less than 0.5 in Fig.~\ref{fig:group}(b). This is because we define the secrecy rate to be a positive value in~$\eqref{rate}$, where the eavesdropper rate can be larger than BD's rate when the eavesdropper's channel $\mathbf{H}_e$ is much stronger than the primary link's channel conditions. It is important to acknowledge that our algorithm works on individual optimality rather than joint and ergodic optimality over group realizations, i.e., the sum of individual optimality may not be equal to the joint optimality of the group. Furthermore, the proposed algorithm \ref{algo1} will converge fast within four iterations and tolerance $10^{-4}$ as shown in Fig.~\ref{fig:convergence}(a) and Fig.~\ref{fig:convergence}(b), respectively. 

\begin{figure}
    \centering
    \includegraphics[width=1.85in]{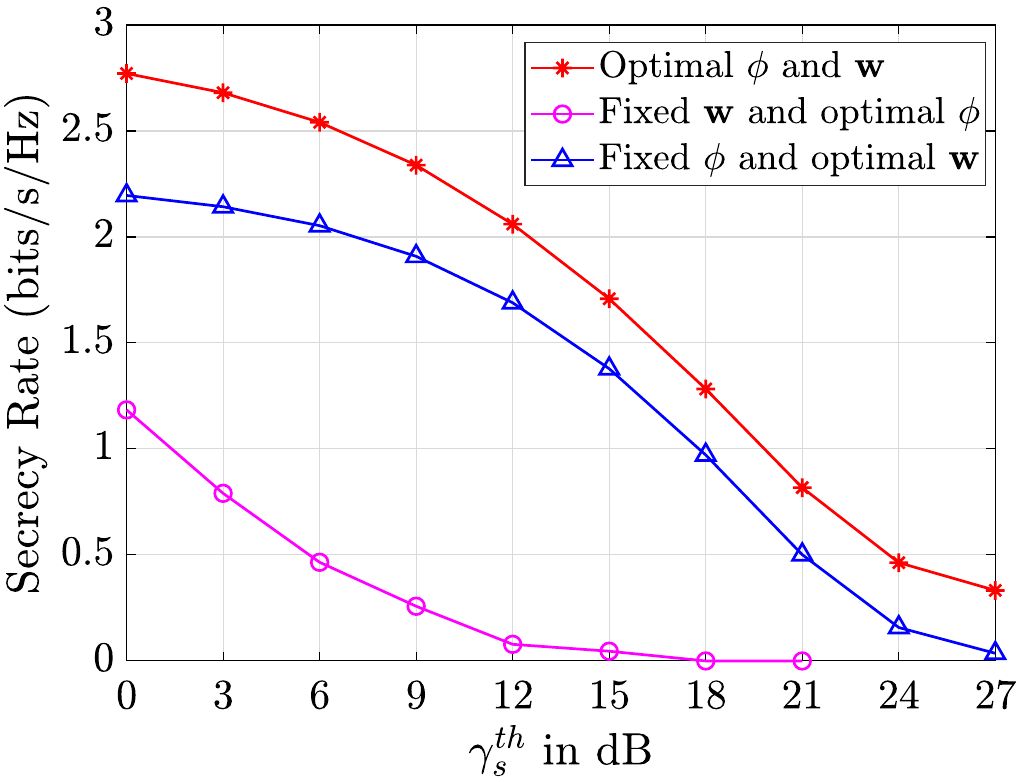}\vspace{-4mm}
    \caption{Secrecy rate comparison between proposed and benchmark schemes, where the fixed beamforming vector is MRT, and the fixed $\phi$ is $0.5$.}
    \label{fig:rate comparison}
\end{figure}
Next, insightful plots are presented for key algorithm parameters. Fig.~\ref{fig:insight phi lambda} shows that the values of optimal $\phi$ and $\lambda_1$ tend to increase as the threshold $\gamma_s^{th}$ increases. This can be attributed to the fact that a more significant value of $\phi$ results in higher transmission power in the primary link rather than jamming an eavesdropper. In contrast, a more significant value of $\lambda_1$ directs the beamforming vector closer to $\mathbf{w}_{e_1}$, thereby maximizing the received SNR at PR. Consequently, we can expect a decrease in the achievable secrecy rate with an increase in the threshold $\gamma_s^{th}$, as shown in Fig.~\ref{fig:rate comparison}. Furthermore, in Fig.~\ref{fig:rate comparison}, the proposed scheme outperforms the conventional schemes at different QoS requirements based on achievable rate comparison. Note that the red line denotes the secrecy rate based on the proposed algorithm, where both power allocation factor $\phi$ and beamforming vector $\mathbf{w}$ are optimized, the blue line denotes the optimal $\phi$ with maximum ratio transmitting (MRT) vector $\mathbf{w}_{\rm{MRT}} = \frac{\mathbf{h}_2}{\lVert \mathbf{h}_2 \rVert}$, and the magenta line denotes the optimal $\mathbf{w}$ with a fixed $\phi=0.5$. Moreover, it can be shown from Fig.~\ref{fig:insight}(a) that the secrecy rate is directly proportional to the number of antennas at PT $N_t$ and inversely proportional to the number of antennas at Eve $N_e$. Fig.~\ref{fig:insight}(b) demonstrates that increased transmission power $P$ and reflection coefficient $\alpha$ result in a higher secrecy rate.

\begin{figure}
	\centering  \vspace{-1mm}
	\subfigure[Variation with $N_t$ and $N_e$.]{{\includegraphics[scale=0.244]{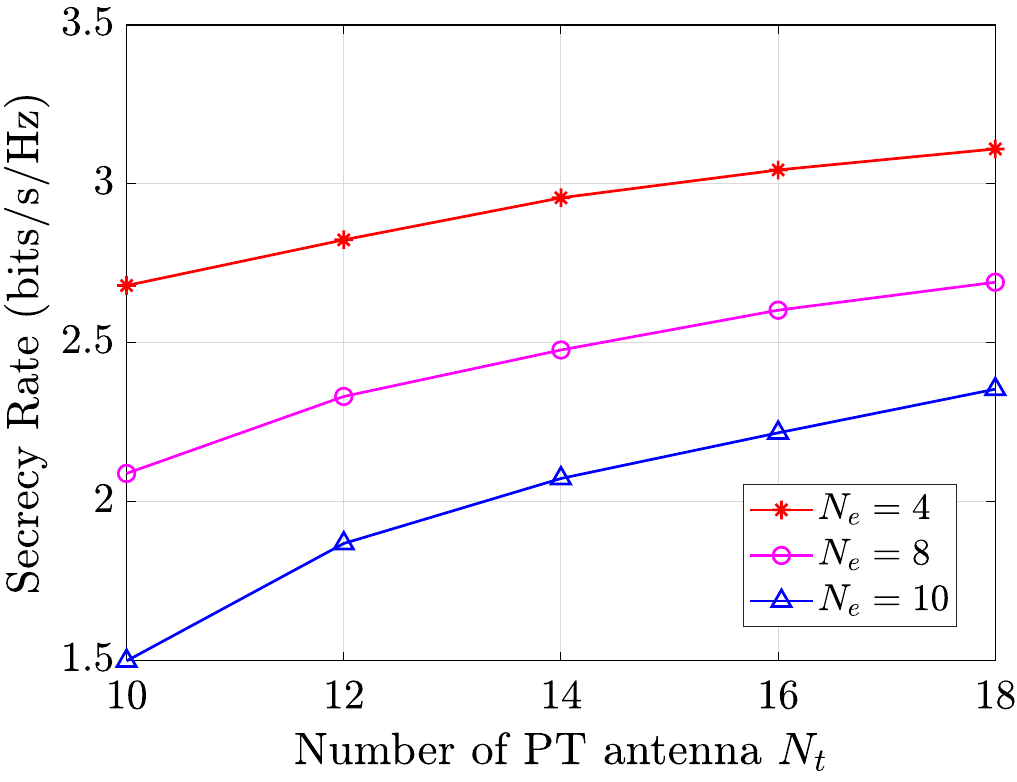}}}
	\subfigure[Variation with $P$ and $\alpha$.]{{\includegraphics[scale=0.244]{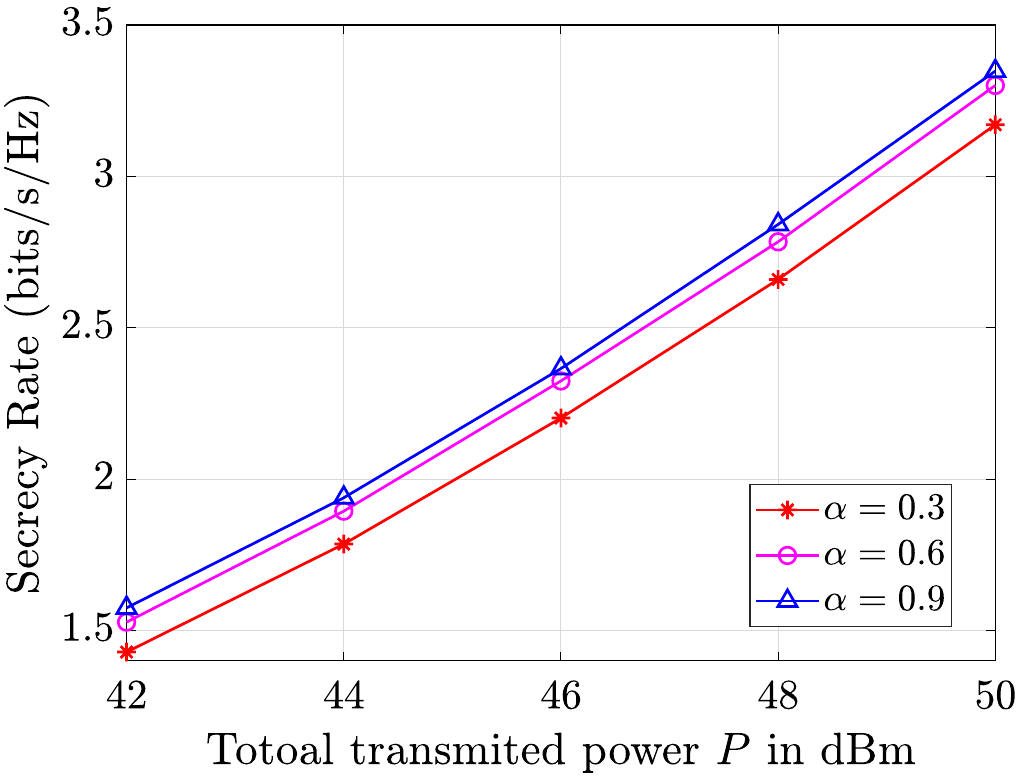}}}\vspace{-3mm}
    \caption{Insights on achievable secrecy rate $R_{sec}$ for different values of the number of antennas, total transmitted power $P$, and reflection coefficient $\alpha$.} 
        \label{fig:insight}\vspace{-3mm}
\end{figure}

\section{Conclusion}\label{sec:Conclusion}\vspace{-1mm}
In this paper, we delve into the topic of secure transmissions for SR networks with multiple antennas using AN injection. Firstly, we introduce the setup of the secure transmission system. Then, we devise an alternating optimization algorithm to maximize the secrecy rate by designing the power allocation factor $\phi$ and beamforming vector $\mathbf{w}$. Our findings reveal that the achievable secrecy rate is significantly impacted by the QoS constraints of the primary system and BD. Furthermore, our secure SR system designs can be extended to multiple tags, primary receivers with multiple antennas, and colluding eavesdroppers in the future.

\vspace{-0.2mm}
\section{Acknowledgment}\vspace{-1mm}
This research work has been supported in part by the Australian Research Council Discovery Early Career Researcher Award (DECRA) - DE230101391.
\vspace{-1mm}
\bibliographystyle{IEEEtran}
\bibliography{ref}
\end{document}